\documentclass[12pt,twoside,letter]{amsart}
\usepackage{txfonts}
\usepackage{comment}
\usepackage{bbm}
\usepackage{mathrsfs}
\usepackage{amsfonts}
\usepackage{amsmath}
\usepackage{amssymb}
\usepackage{wasysym}
\usepackage{psfrag}
\usepackage{graphics,epsfig,amsmath}  
\setlength{\unitlength}{1cm}
\usepackage{comment}
\usepackage{enumerate}

\newtheorem{theorem}{Theorem}
\newtheorem{definition}[theorem]{Definition}

\newtheorem{proposition}[theorem]{Proposition}
\newtheorem{preremark}[theorem]{Remark}
\newenvironment{remark}{\begin{preremark}\rm}{\end{preremark}}

\title[Ground states in ferromagnetic media]
{A continuous family of equilibria in ferromagnetic media
are  ground states}

\author[X. Su] {Xifeng Su}
\thanks{X. S supported by National Natural Science Foundation of China (Grant No. 11301513) and ``the Fundamental Research Funds for the Central Universities"}
\address{School of Mathematical Sciences\\
Beijing Normal University\\
No. 19, XinJieKouWai St.,HaiDian District\\
 Beijing 100875, P. R. China}
\email{xfsu@bnu.edu.cn}

\author[R. de la Llave]{Rafael de la Llave}
\thanks{R.L. has been partially supported by 
NSF grant DMS-1500943. R.L. also akcnowledges the support of 
the Tang Aoqing visiting professorship in Jilin University.}
\address{School of Mathematics\\
Georgia Institute of Technology \\
686 Cherry St. \\
Atlanta GA 30332, USA}
\address{
JLU-GT joint institute for Theoretical Science\\
Jilin University\\
Changchun, 130012, CHINA}
\email{rafael.delallave@math.gatech.edu}

\begin{document}

\today

\maketitle

\begin{abstract}
We show that a foliation of 
equilibria  (a continuous family of equilibria 
whose graph covers all the configuration space) 
in ferromagnetic models are ground states. 

The result we prove is very general, and it applies 
to models with long range interactions and many body. 
As an application,
we consider several models of networks of interacting particles including models of Frenkel-Kontorova type on $\mathbb{Z}^d$ and one-dimensional quasi-periodic media.

The result above  is an analogue of several results in the calculus
variations (fields of extremals) and in PDE's. Since the models 
we consider are discrete and long range, new proofs need to 
be given.
We also note that the main hypothesis of our result
(the existence of foliations of equilibria) is the
conclusion (using KAM theory) of several recent papers.  Hence, we 
obtain that the KAM solutions recently established are minimizers 
when the interaction is ferromagnetic (and transitive). 
\end{abstract}

{\bf Keywords: }\keywords{Ground states, quasi-periodic solutions, Hilbert integrals, 
mimimizers, Frenkel-Kontorova models}

{\bf MSC:}\subjclass[2010]{
82B20,  
37J50,  
49J21,  
82D30   
}

\section{Introduction}
Many physical problems lead to variational problems for functions described 
in discrete sets. 

A model to keep in mind  as motivation is 
the Frenkel-Kontorova model \cite{FK} 
which considers  configurations $u=\{u_i\}_{i\in \mathbb{Z}} $
and tries  to find those  that 
minimize the energy given by the formal sum
\begin{equation}\label{FK} 
\mathscr{S} (u) = \sum_{i\in \mathbb{Z}}  \left[\frac{1}{2}(u_i-u_{i+1})^2 - V(u_i)\right]. 
\end{equation} 

There are several physical interpretations of the FK model
 \cite{BraunK,Selke2}, the
original one is the interaction of planar dislocations in a 3-D
crystal, but it has appeared as a model of other situations.  We can
think of $u_i$ as describing the position of the $i^{\rm{th}}$ atom
deposited over a $1-D$ medium.  The first part of the sum describes
the interaction between the nearest  particles. The function
$V$ models the interaction of the atoms with the medium, which is assumed to 
be periodic or quasi-periodic in models of crystals and 
quasi-crystals (in this paper, periodicity or 
quasiperiodicity is not assumed). 
 Note that, with many of these interpretations, 
it is natural to consider also more general models which involve 
longer range interactions, multi site interactions, or higher dimensional 
crystals.  Hence, in this paper we will include these generalizations.

In the case that $V$ is a periodic function, the problem of showing
existence of plane-like minimizers (i.e. minimizers that differ from a linear 
function by a bounded function) of \eqref{FK} with a well defined frequency
independently by Mather \cite{Mather'82} and Aubry \cite{ALD}, which
is now referred as Aubry-Mather theory. Several authors ( see
\cite{Blank'89, Blank'90,KLR'97, CLlave'98,LV'07,LV'07b,LV'10} and
references therein) generalized the setting of Aubry-Mather theory to
higher dimensional crystals, more general media and for many-body
interactions. Related models appear in PDE's \cite{Moser89, 
RabinowitzS}, minimal surfaces \cite{CaffarelliL, Valdinoci, Torres}, 
fractional laplacian operators \cite{ValdinociL, Davila}. 

In the case that $V$ in \eqref{FK} is a quasi-periodic function, the
problem to establish all the results of Aubry-Mather theory for
periodic systems is still open.  Notably, the existence of plane-like
minimizers is still not settled.  In \cite{LS'03} there are examples of
quasi-periodic potentials for which no plane-like minimizer exists. On
the other hand, when the potential $V$ is small enough, the papers
\cite{SuL1, SuL2, SuZL15,ZhangSL15} use a rather unusual KAM theory to
construct families of equilibria which are plane-like. The results of
this paper show that the families constructed by KAM method are
minimizers when the problem is ferromagnetic.  Hence a very
interesting problem is to study the transition -- now known to exist
-- from models with plane-like minimizers to models without them. The
papers \cite{SuL1, SuL2} lead to efficient numerical algorithms which
were implemented in \cite{BL'13} and lead to several conjectures about
the transition between plane-like and non plane-like
minimizers. Notably \cite{BL'13} discovered numerically scaling
relations similar to those in phase transitions in the breakdown of
analyticity of plane-like solutions in quasi-periodic media. The papers 
\cite{SuZL15,ZhangSL15} also present efficient algorithms for the computation of other 
solutions, but they have not been implemented yet. 

The goal of this paper is to show that for ferromagnetic models when
there are continuous families of equilibria  whose graphs  cover the whole phase
space they are actually ground states (also called class-A
minimizers). In particular, the solutions produced by KAM theory
in \cite{SuL1,SuL2,SuZL15,ZhangSL15} are
ground states.

Results establishing that families 
of equilibria are minimizers  are very common place in the standard calculus of
variations. They are proved by either the methods of \emph{fields of
  extremals} or \emph{Hilbert integral} \cite{Caratheodory}. In our
case, since we are considering discrete space and long range models,
these methods do not seem to apply directly and we have to use a
different method. 

We note that, as it is customary,  the
non-degeneracy equations of KAM theory are weaker than those in the
variational theory. Roughly speaking, the KAM theory just requires
that certain operators are invertible.  The variational theory
requires that these quantities are positve definite. 
On the other hand, the KAM theory is more sensitive to quantitative 
features. For example, in \eqref{FK} and periodic $V$, 
the KAM only applies for 
$V$ which are small in a smooth norm, 
whereas the variational methods apply for any differentiable $V$. 

In Section~\ref{general}, we present the results in a very general set
up, patterned after the general set up of statistical mechanics
\cite{Ruelle} allowing multi-body and long range interactions. 
In
Section~\ref{sec:examples} we present again the results for some
concrete models, which have appeard in the literature.  Even if this
could have been avoided logically since the models in
Section~\ref{sec:examples} are particular cases of those in
Section~\ref{general} we hope that this will add to the readability of
the paper and as motivation for those interested in the concrete
models and in numerical implementations. Also, the methods of proof
used in  Section~\ref{sec:examples} are different from the methods
 used in the proof of the general theorem and closer to the arguments
 in the classical calculus of variations.

\section{Formulation of the main result} 

\subsection{A very general set-up}\label{general}
We consider  a very general setup motivated by the
formulation in  \cite{Ruelle} of 
statistical mechanics.  Later, in Section~\ref{sec:examples}, 
we will present more details for 
less general set ups, which may be more familiar.

\subsubsection{General assumptions on the systems and its configurations}
We consider a discrete countable set $\Lambda$. Its elements will 
be called sites. The set $\Lambda$ may be imagined as a network  of 
particles. Many cases in statistical mechanics consider that 
$\Lambda$ is an integer lattice, corresponding physically 
to a crystal. 

We assume that the state of each site is given by a real
number. Hence, the state of the system is given by a function $u:
\Lambda\rightarrow \mathbb{R}$ which assigns to each site
$i\in\Lambda$ the value $u_i$. For our purposes, it is crucial that
the order parameter at each site is a one-dimensional number.  We do
not know how to deal with two-dimensional phase spaces.  Indeed in
\cite{Blank'90} presents counterexamples to several crucial statement
in our setting when the order parameter are 2-dimensional. The papers
\cite{Mather'91, Mane'97} contains rather satisfactory analogues of
several other results of Aubry-Mather theory for 
higher dimensional order parameters but they do not consider 
higher dimensional independent variables.

We associate to the finite subsets $B$ of $\Lambda$ an energy function
$H_B: \mathbb{R}^B \rightarrow \mathbb{R}$, 
which models the (possibly many-body and long range) interaction. 
In Physical terms, the interaction may be even among the different 
sites or among the sites and a substratum. 
The total energy associated to a configuration $u$ is given by the formal sum: 

\begin{equation}\label{formal energy}
\mathscr{S} (u) = \sum_{B\subseteq \Lambda\atop
\#B < \infty} H_B(u) \qquad \forall~ u\in\mathbb{R}^\Lambda
\end{equation}where $H_B(u)$ depends only on $u|_B$.

\begin{remark} In this paper, we will not assume any periodicity properties 
of the set $\Lambda$ and of the interaction, since this will 
not play any role in our arguments. On the other hand, we note that the 
main hypothesis of this paper  (the existence of a
foliation of equilibria)  is the conclusion of several 
other papers which use periodicity. 
In~\cite{CLlave'98,LV'10}, there is a very general setup for
quasi-periodicity involving the action of a group $G$ on 
$\Lambda$ and on the interaction.  $G$ is assumed to satisfy some mild growth properties.

In  Section~\ref{sec:examples} we will present the results for 
some finite range models which are concrete examples of 
the set up and for which our main hypotheses are
verified. 
\end{remark} 

\subsubsection{Critical points and ground states} 
The following definitions are very standard in the calculus of 
variations. 

\begin{definition} 
We say that a configuration $u$ is an equilibrium for 
an energy \eqref{formal energy} when 
\begin{equation}\label{equilibrium equation}
\frac{\partial}{\partial u_i} \mathscr{S}(u) \equiv \sum_{B\ni i} \partial_{u_i} H_B(u) =0 \qquad \forall~i\in\Lambda.
\end{equation}
\end{definition} 

For simplicity, we denote $\frac{\partial}{\partial u_i}
\mathscr{S}(u) $ by $E_i(u)$ and $E(u)=\{E_i(u)\}_{i\in \Lambda}$.

We note that, even if the sum defining $\mathscr{S}$ is formal, the
equilibrium equations \eqref{equilibrium equation} are meant to be well
defined equations.  This can happen for example if $H_B \equiv 0$
whenever $\text{diam}(B)\geq L$. (These are called finite range
interactions and Frenkel-Kontorova models are an example.) In
Section~\ref{sec:regularity}, we will formulate a condition, more
general than finite range which is enough for our purposes.

We are interested in the existence of the following special class of equilibria.
\begin{definition}[Ground states]
We say that a configuration $u$ is a ground state ( or a class-A minimizer in the terminology of Morse \cite{Morse'24}) if 
for any configuration $\varphi$  whose support is a finite subset of $\Lambda$ we have
\begin{equation}\label{ground-state-def}
\mathscr{S}(u) - \mathscr{S}(u+\varphi) \leq 0  .
\end{equation}
\end{definition}

Note that \eqref{ground-state-def} should be understood cancelling 
all the terms that are identical. 
That is 
\begin{equation}\label{ground-state-precise} 
\sum_{\#B < \infty \atop
B \cap \text{supp}(\varphi) \ne \emptyset} H_B(u) -  H_B( u + \varphi)  \le 0.
\end{equation} 
We note that the conditions~\eqref{ground-state-precise} make sense 
when the interactions  are finite range since the sum in \eqref{ground-state-precise} involves only finitely many terms. 
In Section~\ref{sec:regularity}, 
we will make assumptions  more general than 
finite range that ensure that the 
sum in \eqref{ground-state-precise} make sense.  It is clear 
that  the main idea is that we will assume the terms in the sum 
\eqref{ground-state-precise}  as well as their derivatives 
decay fast enough for all $u$ in a class of functions. We will 
postpone the precise formulation till we have specified which classes
of functions we will consider. 

Since expressions similar to \eqref{ground-state-precise}  
will appear often in our calculations, we will introduce the 
notation
\begin{equation}\label{gammanotation} 
\Gamma(\varphi; u, \tilde B)  \equiv
\sum_{\#B < \infty \atop
 B\cap \tilde B  \ne \emptyset} H_B(u) -  H_B( u + \varphi). 
\end{equation} 

We remark that if $\text{supp}(\varphi)  \subset \tilde B$, we have 
\[
\Gamma(\varphi; u, \tilde B)   =  
\Gamma(\varphi; u, \text{supp}(\varphi) )  .
\]
The reason is that, the sums defining the two $\Gamma$ differ only in 
sets $B$ which do not intersect the support of $\varphi$.  Hence, the corresponding term in the sum is 
zero. 

\medskip

It is easy to check that a ground state is  an equilibrium. 

\subsubsection{Foliations by equilibria} 

We say that a 
collection of configurations $\{u^\beta\}_{\beta\in\mathbb{R}}$ 
is a foliation
when:

\begin{itemize}
\item [(A1)] $E(u^\beta) =0$, i.e. $E_i(u^\beta) =0$ for any $\beta\in \mathbb{R}, i\in\Lambda$;

\item [(A2)] $u^\beta$ is increasing with respect to $\beta$, i.e. if $\beta_1\leq\beta_2$, $u_i^{\beta_1} \leq u_i^{\beta_2}$ for any $i\in\Lambda$;

\item [(A3)] $u_i^\beta \rightarrow \pm \infty$ as $\beta$ goes to $\pm \infty$ for any $i\in \Lambda$;

\item [(A4)] ${\left\{ u_i^\beta ~\big|~ \beta\in\mathbb{R} \right\} } = \mathbb{R}$.
\end{itemize}

The most crucial assumption for us is (A4).  This means that, as we
move the parameters $\beta$, the graphs of the functions $u^\beta$
sweep out all the space $\Lambda \times \mathbb{R}$. 

We say a foliation is strict if 
\begin{itemize}
\item [(A2)']  $\beta_1< \beta_2 \Longrightarrow u_i^{\beta_1} < u_i^{\beta_2}$ for all $i\in\Lambda$.
\end{itemize}

Having a family of critical points satisfying (A1)-(A4) is extremely
analogous to the assumption on the \emph{fields of extremals} in
the calculus of variations
\cite{Caratheodory}\footnote{Note that the fields of extremals in \cite{Caratheodory} 
are formulated for functions of a one dimensional variable taking values 
into any dimensional space. Here we are in the opposite situation: 
we are considering functions of many variables, but taking values in a one
dimensional space.}.

The usual Aubry-Mather theory for a fixed frequency $\omega$ produces a family 
satisfying (A1)-(A3) -- but in general not (A4). On the other hand, for 
Diophantine $\omega$ and (and some models) we can use KAM theory to produce families satisfying 
(A1)-(A4). The calculus of variations methods do not assume 
that the system is close to integrable, but they require positive 
definite assumptions on the interaction. On the other hand, KAM 
methods do  not require that the system is convex (positive 
definite Jacobian) but they require that the system admits an approximate solution to the invariance equation (in particular, this is satisfied for systems close to integrable).

In the applications to Aubry-Mather theory which we will discuss 
later in Section~\ref{sec:examples}, the set $\Lambda$ will be 
$\mathbb{Z}^d$ and the functions $u^\beta$ will be roughly linear. 

We point out, however that in the case of no interactions, in dimensions 
bigger or equal than $2$ one could have also harmonic polynomials,
which are minimizers. It is marginally pointed out in
\cite{Moser86,Moser89} that developing a variational theory
starting from the harmonic polynomials of higher 
degree would be very interesting. 

Note that the subsequent properties we will assume depend on the 
class of functions $u^\beta$.

\subsubsection{Ferromagnetic properties} 
\label{sec:ferromagnetic} 

\begin{definition}[Ferromagnetic condition]
We say that the $C^2$ interaction potential $H$ satisfies the ferromagnetic condition if 
\begin{equation}
\frac{\partial^2 H_B}{ \partial u_p \partial u_q}(u)  \leq 0  \qquad \forall p,q\in \Lambda, p\neq q,
\end{equation}where $u$ is any configuration on $\Lambda$ and $B$ is any finite subset of $\Lambda$.
\end{definition}

\begin{definition}[Ferromagnetic Transitive]\label{Ftransitivity}
We say that a ferromagnetic  interaction in $\Lambda$ is transitive 
for a class of configurations $u^\beta$ when, given any $p, q \in \Lambda$
there exist an integer $k\geq1$, a sequence $p_0, \ldots, p_k$ in $\Lambda$ with $p_0=p, p_k=q$ and sets $B_i$ containing a pair $p_i,p_{i+1}$ for $i=0,\ldots, k-1$ such that, for any $\varphi$ with compact support,
\[
\partial_{p_i} \partial_{p_{i+1}} H_{B_i} (u^\beta + \varphi)< 0.
\]
\end{definition}

In the main cases of interest, such as the Frenkel-Kontorova models, 
we will see that the 
$\partial_{p_i} \partial_{p_{i+1}} H_{B_i}$ are independent of the configuration,
so that this assumption will  be very easy to verify in several models of practical 
importance. 

The assumption in Definition~\ref{Ftransitivity} appeared in \cite{LV'07b} where it was shown that it implies 
that the gradient flow of the formal energy $\mathscr{S}$ 
in \eqref{formal energy} satisfies a strong comparison principle. 
In the PDE case, a comparison principle for the gradient flow 
would give a very quick proof of our results, but the long range of 
the interactions requires an extra argument. See Remark~\ref{partialrange}. 

\begin{remark} \label{rem:twist} 
The ferromagnetism assumptions, when $\Lambda = \mathbb{Z}$, 
and the interactions are nearest neighbor, become the twist conditions in 
Aubry-Mather theory. We also note that they can be thought of as analogues of 
ellipticity conditions for continuous variational problems. See \cite{CLlave'98, LV'07} 
for some more explanations of these  analogies. 
\end{remark} 

\subsubsection{Graph theoretic language to describe the Ferromagnetic assumptions} 
\label{sec:graph} 

We can reformulate some of the assumptions of
Section~\ref{sec:ferromagnetic} in the language of graph theory.  The
introduction of a new language is purely cosmetic, but allows us to
express future arguments concisely and it may be illuminating.

The key observation is that we 
can  interpret Definition~\ref{Ftransitivity} as  the existence of a graph
structure on $\Lambda$. 

Whenever 
there exists $B$ such that for all $u^\beta$
\begin{equation}\label{linked} 
\partial_{p}\partial_{q} H_B(u^\beta + \varphi) <0
\end{equation}
then the sites $p, q$ are linked.

The physical meaning of \eqref{linked} is that the configuration at
$p$ affects the forces experienced at the site $q$ (and viceversa, in
agreement with the action-reaction principle).  The
Definition~\ref{Ftransitivity} can be interpreted as saying that any
site can influence any other site, if not directly, through
influencing intermediate sites that in turn influence some others.

It is natural to endow $\Lambda$ with a graph structure by considering 
the points of $\Lambda$  as vertices 
and drawing an edge among two linked sites in the sense of
\eqref{linked}.

 The assumption in 
Definition~\ref{Ftransitivity} can be interpreted as saying that, starting from any site, we can reach any other 
jumping only through linked sites or that the graph is connected.

The graph structure allows us to introduce  two notions 
that are standard in graph theory: distance and connectedness. 

Given a path $\gamma$ in the graph, we define the $|\gamma|$ the
length of a path $\gamma$ as the number of edges it contains.

We define the \emph{distance between two sites} $i,j \in \Lambda$ 
as 
\begin{equation}\label{distance} 
d(i,j) = \inf\{~ |\gamma| \, ~\big|~  \gamma ~\text{joins}\ i, j\}.
\end{equation} 
This $d$ satisfies the usual assumptions of distance. 

We also define the \emph{distance of a  point $i$  to a set}  $S \subset \Lambda$
as 
\begin{equation}\label{distanceset} 
d(i,S) = \inf\{~ d(i, j) ~\big|~  j \in S \}.
\end{equation} 
(Since the $d(i, j)$ takes values in integers, it is clear that the infimum in \eqref{distanceset} is a minimum.)

We can also define that a set $S$  is \emph{connected} when any pair of points 
 can be joined 
by paths all whose edges have end points in the set $S$.

It will be important for us later 
that, given a finite set $B$,  we can obtain another finite set
$\text{Con}(B)$ which is connected and which contains $B$. 

If Definitions \ref{Ftransitivity} holds, given any pair $i,j\in
\Lambda$ we can find a path joining $i$ to $j$.  We denote this path
as $\gamma_{i,j}$. Given a path $\gamma_{i,j}$ we denote
$v(\gamma_{i,j})$ the \emph{vertices} of the path. Hence, given a set
$B$, we define
\begin{equation}\label{connected-hull} 
\text{Con}(B) = \cup_{i, j\in B} v(\gamma_{i,j} ).
\end{equation} 

Clearly, $B \subset \text{Con}(B)$ and $\text{Con}(B)$ is connected
because we note that given any pair of points $a \in \gamma_{i,j}, b
\in \gamma_{\tilde i, \tilde j}$ we can find a path joining them by
starting in $a$, following $\gamma_{i,j}$ till $j$, then $\gamma_{j,
  \tilde i}$ and then $\gamma_{\tilde i, \tilde j}$ till we arrive to
$b$.

The following elementary remark will play an important role for us later, so 
we formulate it now. It is mainly an exercise in the notation. 
\begin{proposition}\label{silly} 
Assume that the interaction satisfies Definition~\ref{Ftransitivity}. 

Given any finite set $S$, the set 
\[
S_1 = \{~i ~\big|~ d(i, S)  \le   1\} = \{~ i ~ \big| ~ j \in S, d(i, j) \le 1\}
\]
contains at least a point which is not in $S$. 
\end{proposition}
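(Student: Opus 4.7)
The plan is to use that $\Lambda$ is infinite (so $\Lambda \setminus S$ is nonempty whenever $S$ is finite) together with the connectedness of the graph structure furnished by Definition~\ref{Ftransitivity}. The idea is that a path from a point inside $S$ to a point outside $S$ must cross the boundary of $S$ in one step, producing a vertex outside $S$ but adjacent (in the graph) to a vertex of $S$; such a vertex lies in $S_1 \setminus S$.

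More precisely, I would begin by fixing any $p \in S$ (if $S = \emptyset$ the claim is vacuous, so assume $S$ is nonempty) and picking any $q \in \Lambda \setminus S$, which exists because $\Lambda$ is infinite while $S$ is finite. By the transitivity assumption in Definition~\ref{Ftransitivity}, applied to the pair $p, q$, there is a path $\gamma_{p,q}$ in the graph on $\Lambda$ described in Section~\ref{sec:graph}, with consecutive vertices
\[
p = p_0, p_1, \ldots, p_k = q,
\]
each pair $p_i, p_{i+1}$ linked in the sense of \eqref{linked}.

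Next, I consider the smallest index $\ell$ for which $p_\ell \notin S$. Such an $\ell$ exists because $p_k = q \notin S$, and moreover $\ell \geq 1$ since $p_0 = p \in S$. Then $p_{\ell-1} \in S$ by minimality of $\ell$, and the edge between $p_{\ell-1}$ and $p_\ell$ is present in the graph, so
\[
d(p_\ell, S) \leq d(p_\ell, p_{\ell-1}) \leq 1.
\]
Therefore $p_\ell \in S_1$, while $p_\ell \notin S$, exhibiting the required point in $S_1 \setminus S$.

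This is essentially a one-line graph-theoretic observation, so I do not anticipate a real obstacle; the only subtle point to flag is the implicit standing assumption that $\Lambda$ is infinite (which is the case in all intended applications, such as $\Lambda = \mathbb{Z}^d$), since otherwise the statement could fail for $S = \Lambda$.
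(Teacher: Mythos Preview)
Your argument is correct and is exactly the approach the paper takes: the paper's proof is the single observation that if no edge left $S$ then points of $S$ could not be connected to points outside, and your first-exit-time formalization is just a careful write-up of that idea. Your remark about the implicit assumption that $\Lambda$ is infinite is also apt, since the statement would indeed be vacuous or false for $S=\Lambda$ if $\Lambda$ were finite.
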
 

The  totally trivial proof of Proposition~\ref{silly} is the observation that, if there 
was no path that stepped out of $S$, it would be impossible for any point in $S$ 
to be joined to other points far away. 
\qed

\subsubsection{Coerciveness Assumption} 
\label{sec:coerciveness} 

Given a family $u^\beta$ as before, we will assume 
that for any compactly supported $\varphi$
and any $i \in \text{supp}(\varphi)$ we have 
\begin{equation} 
\label{eq:coerciveness} 
\lim_ {|t| \to \infty} \sum_{B \cap \text{supp}(\varphi) \ne \emptyset }
\left[H_B(u^\beta + \varphi + \delta_i t ) - H_B( u^\beta + \varphi) \right]  = + \infty ,
\end{equation} where $\delta_i$ denotes the Kronecker function which takes the value 1 at $i$ and 0 at any other point.

Note that \eqref{eq:coerciveness} says that if we make  a test function grow at just 
one point, then the relative energy grows.

\subsubsection{A regularity assumption} 
\label{sec:regularity}

We will be performing some calculations with the equilibrium 
equations. In order to justify them, we will need some assumptions on the 
convergence of the $E_i$ and their derivatives. 

The following assumption is sufficient for the methods used in 
this paper. We note that the finite range of the interaction 
easily implies our assumption. Many 
models of interest (e.g.  the Frenkel-Kontorova models) are finite 
range, but there are aslo models of physical interest which are not. 
See \cite{SuL1} for  a discussion of when hyerarchical models satisfy the 
assumptions. 

Given a class $u^\beta$ satisfying (A1)-(A4) we say that the interaction 
$\{ H_B\}$ is $u^\beta$ summable when: for $\varphi$ satisfying either
\begin{itemize}
\item [a)] $\varphi$ with compact support 
\item [b)] $\varphi =  (u^{\tilde \beta} - u^{\hat \beta})$ for any $\tilde \beta, \hat \beta \in \mathbb{R}$,
\end{itemize}
we have for all $\beta \in \mathbb{R}$,
\begin{equation} 
\label{fast-decay}
\begin{split}
&\lim_{L \to \infty}
\sum_{\text{diam}(B) \ge L 
\atop B \cap \text{supp}(\varphi) \ne \emptyset }
|\partial_{u_i}  H_B( u^\beta + t \varphi )| = 0\\
&\lim_{L \to \infty}
\sum_{\text{diam}(B) \ge L 
\atop B \cap \text{supp}(\varphi) \ne \emptyset }
|\partial_{u_i} \partial_{u_j} H_{B}( u^\beta + t \varphi )| = 0\\
\end{split}
\end{equation} 
and the limit in \eqref{fast-decay} is uniform in $t \in [0,1]$ 

We note that in the case $u^{\tilde \beta} - u^{\hat \beta} \in \ell^\infty$
 (as it happens in KAM theory) and in all families of plane-like equilibria of fixed slope b) is implied by a). We also note that if we assume (A2)' instead of (A2), the case b) can be dispensed with.

The following result is a very simple corollary of the 
coercivity and regularity assumption.

\begin{proposition}\label{trivial} 
Let $u^\beta$ be a family of configurations and $\{ H_B\}_{B \subset \Lambda \atop \#B < \infty}$
be a family of
interactions that satisfy the coerciveness and the regularity assumptions
with respect to them. 

Fix any function $u^\beta$ in the family and a finite set $\tilde B$. 

Then, there is a function $\varphi^*$ such that 
\begin{itemize} 
\item 
$\text{supp}(\varphi^*) \subset \tilde B $ 
\item 
\[
\Gamma( \varphi^*; u^\beta, \tilde B) = \inf\{~ \Gamma( \varphi; u^\beta,\tilde  B) ~\big |~ 
\text{supp}(\varphi) \subset \tilde  B ~\}
\]
where we use the notation introduced in \eqref{gammanotation}.
\item
\begin{equation}\label{equilibriumB}
E_i( u^\beta + \varphi^*)  = 0 \quad \forall~i \in \tilde B .
\end{equation} 
\end{itemize} 
\end{proposition}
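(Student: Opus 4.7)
The plan is to view the constrained minimization as a finite-dimensional problem on $\mathbb{R}^{\tilde B}$ and apply the direct method of the calculus of variations. Since $\tilde B$ is finite, configurations $\varphi$ with $\text{supp}(\varphi) \subset \tilde B$ are identified with vectors $(\varphi_i)_{i \in \tilde B}$, and the statement becomes an assertion about a real-valued function on a finite-dimensional space.

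First I would show that the map $\varphi \mapsto \Gamma(\varphi;u^\beta,\tilde B)$ is well defined and continuously differentiable on $\mathbb{R}^{\tilde B}$. For well-definedness, apply the fundamental theorem of calculus termwise in \eqref{gammanotation}:
\[
H_B(u^\beta) - H_B(u^\beta+\varphi) = -\sum_{i \in \tilde B} \varphi_i \int_0^1 \partial_{u_i} H_B(u^\beta + t\varphi)\, dt ,
\]
sum over $B$ with $B\cap \tilde B \neq \emptyset$, and use the uniformity in $t\in[0,1]$ of the first bound in \eqref{fast-decay} (which gives absolute convergence after removing a finite ``diameter $\leq L$'' core, which is manifestly finite since $\text{supp}(\varphi)\subset\tilde B$). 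Differentiating term by term and invoking the second bound of \eqref{fast-decay} yields
\[
\frac{\partial \Gamma}{\partial \varphi_i}(\varphi;u^\beta,\tilde B) = -\sum_{B \ni i} \partial_{u_i} H_B(u^\beta+\varphi) = -E_i(u^\beta+\varphi) \qquad \text{for } i\in \tilde B,
\]
noting that every $B$ containing an $i\in\tilde B$ intersects $\tilde B$. Interior critical points of $\Gamma$ are therefore exactly the configurations satisfying \eqref{equilibriumB}.

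Second, I would use the coerciveness hypothesis \eqref{eq:coerciveness} to show that $\Gamma(\,\cdot\,;u^\beta,\tilde B)$ is coercive on $\mathbb{R}^{\tilde B}$, so that its sublevel sets are compact. Suppose instead there were $\varphi^{(n)}\in\mathbb{R}^{\tilde B}$ with $\|\varphi^{(n)}\|_\infty\to\infty$ yet $\Gamma(\varphi^{(n)})\leq C$. By finiteness of $\tilde B$, pass to a subsequence along which the maximum is attained at a fixed coordinate $i_*$ and $|\varphi^{(n)}_{i_*}|\to\infty$. Decompose $\varphi^{(n)} = \psi^{(n)} + \varphi^{(n)}_{i_*}\delta_{i_*}$ with $\text{supp}(\psi^{(n)})\subset \tilde B\setminus\{i_*\}$. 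If $\psi^{(n)}$ is bounded, extract $\psi^{(n)}\to\psi^*$ and apply \eqref{eq:coerciveness} to the base configuration $\psi^*+\delta_{i_*}$ at the site $i_*$ (after a shift $t\mapsto t-1$) to force $\Gamma(\varphi^{(n)})\to+\infty$, contradicting the bound. Otherwise I iterate this peeling, proceeding by induction on $|\tilde B|$.

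With continuity and coerciveness in hand, the direct method produces a global minimizer $\varphi^*$ with $\text{supp}(\varphi^*)\subset \tilde B$, and the gradient computation above shows that $\varphi^*$ satisfies \eqref{equilibriumB}. The main obstacle I anticipate is the upgrade from the ``one-coordinate-at-a-time'' form of \eqref{eq:coerciveness} to joint coerciveness on $\mathbb{R}^{\tilde B}$: handling the inductive step requires some care with the uniformity in the nested limit configurations $\psi^*$ as successive coordinates are peeled off, but the finiteness of $\tilde B$ keeps the induction tractable.
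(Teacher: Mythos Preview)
Your approach is essentially identical to the paper's: reduce to a finite-dimensional minimization on $\mathbb{R}^{\tilde B}$, use the regularity assumption for differentiability, use coerciveness to obtain a minimizer, and identify the gradient of $\Gamma$ with $-E_i$ to get \eqref{equilibriumB}. The paper's own proof is a one-paragraph sketch that simply asserts the function ``tends to infinity as any of its arguments goes to infinity,'' so your explicit discussion of the passage from the one-coordinate form of \eqref{eq:coerciveness} to joint coerciveness on $\mathbb{R}^{\tilde B}$ is in fact more careful than what appears in the paper.
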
 

The proof of Proposition~\ref{trivial} is very easy. We note that we
are considering a function of finitely many real variables (the values
of $\varphi^*$ at the sites of $\tilde B$). By the assumption of
regularity this function is differentiable and tends to infinity as
any of its arguments goes to infinity. Hence, this function reaches
its minimum and the minimum has zero derivative.

Of course, the support of the minimizing function could be smaller
than $\tilde B$ is some of the values of the miniming function happens
to be zero.

\begin{remark} \label{partialrange}
Note that in Proposition~\ref{trivial} we do not obtain that $u+ \varphi^*$ 
is an equilibrium. In \eqref{equilibriumB}, we only obtain that the 
equilibrium equations hold in the finite set $\tilde B$. 

Even if $u^\beta$ satisfies the equilibrium equations in 
$\Lambda$,  when the interaction is 
long range, modifying the configuration in $\tilde B$ can affect the 
equilibrium equations everywhere. 

This is an important difference with the PDE models in the classical calculus
of variations and this a the reason why our arguments need to be different. 
\end{remark}

\subsubsection{Statement of the main general result}

\begin{theorem}\label{main}
Let $H$ be a $C^2$ ferromagnetic interaction potential.
Assume that there exists a collection of configurations $\{u^\beta\}_{\beta\in\mathbb{R}}$ such that  (A1)-(A4) hold. 
Moreover, assume that, with respect to $u^\beta$ the interaction satisfies the ferromagnetic transitivity, coercivity and regularity assumptions above.

Then, all the equilibria $u^\beta$ are ground states.
\end{theorem}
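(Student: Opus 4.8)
The plan is to argue by contradiction, combining a Birkhoff‑type submodularity inequality to reduce to a one‑sided competitor, the restricted minimizers of Proposition~\ref{trivial}, and a discrete strong maximum principle obtained by sliding the foliation $\{u^\beta\}$ over that competitor.

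Suppose some $u^{\beta_0}$ is not a ground state, so that in the sense of \eqref{ground-state-precise} there is a finitely supported $\varphi$ with $\mathscr{S}(u^{\beta_0}+\varphi)<\mathscr{S}(u^{\beta_0})$. Since $H$ is $C^2$ ferromagnetic, each $H_B$ is submodular, i.e. $H_B(a\vee b)+H_B(a\wedge b)\le H_B(a)+H_B(b)$; summing this (the sum being finite, or summable under the decay hypotheses standing behind \eqref{ground-state-precise}) over the $B$ meeting $\text{supp}(\varphi)$ shows that one of the componentwise extrema $w\vee u^{\beta_0}$, $w\wedge u^{\beta_0}$, with $w=u^{\beta_0}+\varphi$, again has strictly smaller relative energy; hence we may assume $\varphi\ge 0$, the case $\varphi\le 0$ being symmetric. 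Now put $\tilde B=\text{Con}(\text{supp}(\varphi))$ and apply the evident variant of Proposition~\ref{trivial} in which the relative energy is minimized over the closed convex set $\{\psi\ge 0:\text{supp}(\psi)\subset\tilde B\}$ — coerciveness still produces a minimizer $w=u^{\beta_0}+\varphi^*$ with $\varphi^*\ge 0$ and strictly smaller relative energy than $u^{\beta_0}$. Thus $w\neq u^{\beta_0}$, the set $C=\{i:w_i>u^{\beta_0}_i\}=\{\varphi^*_i>0\}$ is a nonempty finite subset of $\tilde B$, and variational stationarity at the interior of the constraint gives $E_i(w)=0$ for every $i\in C$ — but, by Remark~\ref{partialrange}, \emph{not} in general for $i\notin C$.

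Next I would slide the foliation onto $w$. For each $i$ the map $\beta\mapsto u^\beta_i$ is monotone and, by (A2)--(A4), onto $\mathbb{R}$, hence continuous; since $w-u^{\beta_0}$ is finitely supported there is a smallest $\beta_2$ with $u^{\beta_2}\ge w$ on all of $\Lambda$. Because $w_i>u^{\beta_0}_i$ for $i\in C$ one has $\beta_2>\beta_0$, and by minimality of $\beta_2$ the graph of $u^{\beta_2}$ must touch that of $w$. As the foliation is increasing, off $C$ one has $w=u^{\beta_0}\le u^{\beta_2}$, with strict inequality when (A2)' holds (the merely non‑strict case needs a short extra argument, for which assumption (b) in the regularity condition is tailored); so the touching set $T=\{i:u^{\beta_2}_i=w_i\}$ is contained in $C$, and in particular the contact occurs at some $i_0\in C$, where both $u^{\beta_2}$ and $w$ satisfy the equilibrium equation.

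Finally I would run a discrete maximum principle along the graph of Definition~\ref{Ftransitivity}. Write $\psi=u^{\beta_2}-w\ge 0$, so $\psi\equiv 0$ on $T$. For $i\in T\cap C$, subtracting the two equilibrium equations and applying the fundamental theorem of calculus,
\[
0=E_i(u^{\beta_2})-E_i(w)=\sum_{q}\Big(\sum_{B\ni i,q}\int_0^1\partial_{u_i}\partial_{u_q}H_B(w+s\psi)\,ds\Big)\psi_q ,
\]
where each coefficient with $q\neq i$ is $\le 0$ by ferromagnetism; since $\psi\ge 0$ and $\psi_i=0$, every term must vanish, so $\psi_q=0$ at every $q$ linked to $i$, and then $\psi_q=-\varphi^*_q\le 0$ forces $\varphi^*_q=0$, i.e. $q\in T$. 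At the sites $i\in T$ where $w$ is not known to be an equilibrium one argues identically, but comparing the two genuine equilibria $u^{\beta_0}$ and $u^{\beta_2}$ and again using $\psi\ge 0$ to conclude $q\in T$. Iterating along the paths supplied by ferromagnetic transitivity — whose edges carry a \emph{strictly} negative coefficient, since $\partial_p\partial_q H_B<0$ there, which is automatic in the concrete models where these second derivatives are configuration‑independent — forces $T=\Lambda$, i.e. $u^{\beta_2}\equiv w$; but then $C=\{u^{\beta_2}>u^{\beta_0}\}$ would be a nonempty finite set, impossible for $\beta_2>\beta_0$. This contradiction proves the theorem. The step I expect to be the crux is exactly the one isolated in Remark~\ref{partialrange}: because the interaction is long range, $w$ is an equilibrium only on $C$, so the maximum principle cannot be propagated ``inward from a boundary'' as in the PDE case — it must be seeded at the single contact site $i_0$ and pushed outward, the non‑equilibrium sites being handled via the comparison of $u^{\beta_0}$ with $u^{\beta_2}$, which is legitimate only because $\psi\ge 0$. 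The remaining care is bookkeeping: justifying the interchange of $\sum_B$, $\sum_q$ and $\int_0^1$ by the regularity assumption \eqref{fast-decay}, and checking that the strict inequality in Definition~\ref{Ftransitivity} persists on configurations lying between two members of the foliation.
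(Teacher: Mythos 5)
Your proof is correct, and its engine is the same as the paper's: slide the foliation until a leaf $u^{\beta_2}$ first touches the modified competitor, then propagate the contact through the graph of Definition~\ref{Ftransitivity} via the identity \eqref{integralequation} (the paper's Propositions~\ref{contact} and~\ref{contact2}), first inside the set where the competitor satisfies the equilibrium equations and then, once the contact escapes that set, between the two genuine leaves $u^{\beta_0}$ and $u^{\beta_2}$. Where you genuinely diverge is the front end. The paper minimizes the relative energy \emph{without} sign constraint over perturbations supported in the connected hull $\text{Con}(\text{supp}(\varphi))$, so that $E_i=0$ holds on a connected set, and then disposes of the three alternatives A)--C) on the sign of $\varphi^*$. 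You instead invoke submodularity of ferromagnetic interactions ($H_B(a\vee b)+H_B(a\wedge b)\le H_B(a)+H_B(b)$, i.e.\ Aubry's crossing lemma) to reduce at the outset to a one-signed competitor, and then minimize under the obstacle constraint $\psi\ge 0$. This buys you a single sliding direction and no case analysis, at the price that the coincidence set $C=\{\varphi^*>0\}$, where alone $E_i(w)=0$ is guaranteed, need not be connected; your argument survives because the moment the propagation exits a connected component of $C$ it lands at a site where $w=u^{\beta_0}$, and from there the comparison of the two exact equilibria $u^{\beta_0}\le u^{\beta_2}$ finishes the job on the whole (connected) graph --- but you should write that step out explicitly, since it is also what legitimizes your final ``impossible for $\beta_2>\beta_0$'' under (A2) rather than (A2)$'$. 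Two small blemishes: the clause ``$\psi_q=-\varphi^*_q$'' is wrong as written (in fact $\psi_q=u^{\beta_2}_q-u^{\beta_0}_q-\varphi^*_q$); all you need, and all you actually use, is that $\psi_q=0$ puts $q$ in $T$. And, as you note yourself, Definition~\ref{Ftransitivity} is stated for configurations $u^\beta+\varphi$ with $\varphi$ compactly supported, whereas your interpolants $w+s\psi$ are convex combinations of two leaves plus a compactly supported term; the paper's own proof relies on the same extension, so this is a shared imprecision rather than a gap of yours.
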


Suppose by contradiction that there exist a number $\beta_0$ and a configuration $\varphi$ whose support is nonempty and finite such that 
\begin{equation}
\mathscr{S}(u^{\beta_0} + \varphi) - \mathscr{S}(u^{\beta_0}) < 0.
\end{equation}
That is, using the notation \eqref{gammanotation}
\begin{equation}
\Gamma(\varphi; u^{\beta_0}, \text{supp}(\varphi) ) < 0. 
\end{equation}

Denote by $\tilde B  = \text{Con}(\text{supp}(\varphi))$ the connected subset constructed in 
\eqref{connected-hull}, $\text{supp}(\varphi) \subset \tilde B$.

Using Proposition~\ref{trivial} 
there is a configuration  $\varphi^*$  with support in 
$\tilde B$ such that
\begin{equation}
\Gamma(\varphi^*; u^{\beta_0}, \tilde B) = \min_{\text{supp}(\varphi_1) \subseteq \tilde B} 
\Gamma(\varphi_1; u^{\beta_0}, \tilde B).
\end{equation}
We note that, since we can take $\varphi$ as a test function $\varphi_1$ we have
\begin{equation}\label{varphistarnottrivial}
\Gamma(\varphi^*; u^{\beta_0}, \tilde B) = \Gamma(\varphi; u^{\beta_0}, \tilde B)  = 
\Gamma(\varphi; u^{\beta_0}, \text{supp}(\varphi)) )  < 0 .
\end{equation}

Hence, if the function $u^{\beta_0}$ was not a ground state, we could find a non-trivial 
$\varphi^*$. We will show that this is impossible and, therefore that 
$u^{\beta_0}$ is a ground state.

We denote
\begin{equation}
\begin{split}
\beta_+ &= \inf\{ \beta\in \mathbb{R} ~|~ u^\beta > u^{\beta_0} + \varphi^* \},\\
\beta_- &= \sup\{ \beta\in \mathbb{R} ~|~ u^\beta < u^{\beta_0} + \varphi^* \},
\end{split}
\end{equation}where the partial ordering $u< v$ is defined by $u_i < v_i$ for any $i\in \Lambda$. Analogous definitions hold for ``$>$", ``$\geq$" and ``$\leq$".
Consequently, we have that assumption (A2) can be formulated just 
as  $u^{\beta_+} \geq u^{\beta_0} \geq u^{\beta_-}$.

By the choice of $\varphi^*$ and $\beta_+$, we have 
\begin{equation}
\begin{split}
E_i (u^{\beta_0} + \varphi^*) &=0, \quad i \in \tilde B \\
E(u^{\beta_+}) & = 0.
\end{split}
\end{equation}

Moreover, we have $u^{\beta_0} + \varphi^* \leq u^{\beta_+}$. 

The following is an elementary calculation using the 
fundamental theorem of calculus which holds for any configuration 
$u^*$ and any $\eta$ so that the regularity assumptions hold.

\begin{equation}\label{integralequation} 
\begin{split}
E_{i^*}& (u^{*} + \eta ) - E_{i^*}(u^{*})\\
&= \int_0^1 dt \left[ \sum_{ j\in \Lambda} 
 \frac{\partial^2 H_B}{\partial u_{i^*} \partial u_j} ( u^{*}+ t \eta ) \eta_j \right] \\
&=  \eta_{i^{*}} \int_0^1 dt  
\frac{\partial^2 H_B}{\partial u_{i^{*}} \partial u_{i^{*}}} ( u^{*}+ t \eta )  \\
&+ \sum_{j \in \Lambda \atop j \ne i^*} \eta_j 
\int_0^1 \frac{\partial^2 H_B}{\partial u_{i^*} \partial u_j} ( u^{*}+ t \eta ) .
\end{split}
\end{equation}

The identity \eqref{integralequation} leads immediately to 
the following proposition. 

\begin{proposition}\label{contact} 
Assume that, in the conditions of \eqref{integralequation} we have 
\[
\begin{split} 
& E_{i^*}(u^*) = E_{i^*}(u^* + \eta )  \\
& \eta \ge 0~ ( \text{ or } \eta \le 0~)  .
\end{split} 
\]
Then, we have that $\eta_j = 0$ for all $j$ such that $d(i^*,j) = 1$
where $d$ is the graph distance introduced in \eqref{distance}. 
\end{proposition}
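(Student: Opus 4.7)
The plan is to substitute the hypothesis $E_{i^*}(u^*) = E_{i^*}(u^*+\eta)$ directly into the identity \eqref{integralequation}, so that its left-hand side vanishes, and then to exploit the sign information that the ferromagnetic hypothesis provides on the mixed partials. Treating the case $\eta \ge 0$ (the case $\eta \le 0$ is completely symmetric), I would first isolate the diagonal $j = i^*$ term from the off-diagonal $j \ne i^*$ terms. The off-diagonal sum consists of contributions $\eta_j \int_0^1 \sum_{B \ni i^*,\, B \ni j} \partial_{u_{i^*}}\partial_{u_j} H_B(u^* + t\eta)\,dt$ in which each integrand is pointwise $\le 0$ by the ferromagnetic assumption, while the diagonal contribution contains the integral of $\partial_{u_{i^*}}^2 H_B$, a quantity of undetermined sign.

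The proposition is to be applied at a contact point, that is, at a site $i^*$ with $\eta_{i^*} = 0$ (this is the situation arising later in the proof of Theorem~\ref{main}, where $i^*$ is chosen as a site at which $u^{\beta_+}$ and $u^{\beta_0}+\varphi^*$ touch). With $\eta_{i^*} = 0$ the indefinite diagonal term drops out entirely, and \eqref{integralequation} reduces to
\[
0 \;=\; \sum_{j \ne i^*} \eta_j\, C_{i^*, j}, \qquad C_{i^*, j} \;:=\; \int_0^1 \sum_{B \ni i^*,\, B \ni j} \partial_{u_{i^*}}\partial_{u_j} H_B(u^* + t\eta)\,dt \;\le\; 0,
\]
with $\eta_j \ge 0$ by hypothesis. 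A vanishing sum of nonpositive quantities forces each summand to vanish, so $\eta_j C_{i^*, j} = 0$ for every $j \ne i^*$.

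To upgrade this to the sharper conclusion $\eta_j = 0$ for each $j$ at graph distance one from $i^*$, I would invoke the strict version of the ferromagnetic condition encoded in the linkage definition \eqref{linked} and in Definition~\ref{Ftransitivity}: such a $j$ is linked to $i^*$, so there exists a set $B$ for which $\partial_{u_{i^*}}\partial_{u_j} H_B(u^\beta + \psi) < 0$ strictly, uniformly in every compactly supported perturbation $\psi$. Evaluated along the straight-line path $\{u^* + t\eta\}_{t \in [0,1]}$, this strict inequality persists under the integral sign, giving $C_{i^*,j} < 0$, and the identity $\eta_j C_{i^*,j} = 0$ then forces $\eta_j = 0$.

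The main obstacle I anticipate is checking that the interpolant $u^* + t\eta$ genuinely lies in the class of configurations to which the strict linkage inequality and the termwise manipulations underlying \eqref{integralequation} apply. In the intended application $u^* = u^{\beta_0} + \varphi^*$ and $u^* + \eta = u^{\beta_+}$, so $u^* + t\eta = u^{\beta_0} + \varphi^* + t(u^{\beta_+} - u^{\beta_0} - \varphi^*)$ can be viewed as $u^{\beta_0}$ perturbed by a combination of one compactly supported term and one of the ``boundary'' increments $u^{\beta_+} - u^{\beta_0}$ appearing in cases (a) and (b) of the summability condition \eqref{fast-decay}. The uniformity in $t \in [0,1]$ built into \eqref{fast-decay} is precisely what will legitimize interchanging sums and integrals and preserve the strict sign of the mixed partials along the whole path, completing the argument.
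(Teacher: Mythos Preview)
Your argument is essentially identical to the paper's: both use $\eta_{i^*}=0$ to kill the diagonal term in \eqref{integralequation}, observe that the remaining off-diagonal summands all have the same sign, and then invoke the strict negativity of the mixed partial at distance-one sites to force $\eta_j=0$. You are also right to flag that $\eta_{i^*}=0$ is an unstated hypothesis of the proposition (the paper's proof simply begins with ``since $\eta_{i^*}=0$''), and your remarks on the regularity of the interpolant $u^*+t\eta$ go slightly beyond what the paper makes explicit.
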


The proof of Proposition~\ref{contact} is just observing that since 
$\eta_{i^*} = 0$, and   all the other terms in \eqref{integralequation} have 
the same sign, we should have that all of the terms in the sum in 
\eqref{integralequation} should be zero. Hence, either $\eta_j = 0$ or
$\int_0^1 \frac{\partial^2 H_B}{\partial u_{i^*} \partial u_j} ( u^{*}+ t \eta )$, 
but for the points $j$ at distance $1$, this integral is not zero. 
\qed

Applying repeatedly Proposition~\ref{contact} we have the following result for functions 
which satisfy the equilibrium equation on a set. 

\begin{proposition}\label{contact2} 
Assume that, in the conditions of \eqref{integralequation} we have 
\begin{itemize} 
\item
\[
\begin{split} 
& 0 =  E_{i^*}(u^*) = E_{i^*}(u^* + \eta ) \quad \forall~ i^* \in \tilde S  \\
& \eta \ge 0 ~( \text{ or } \eta \le 0~)  .
\end{split} 
\] 
\item 
The set $\tilde S$ is connected in the sense introduced in Section~\ref{sec:graph}.
\end{itemize}

Then, we have that $\eta_j = 0$ for all $j$ such that $d(i^*, \tilde S) \le 1$
where $d$ is the graph distance introduced in \eqref{distance}. 
\end{proposition}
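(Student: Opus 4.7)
The strategy is to iterate Proposition~\ref{contact} along the edges of the connectivity graph from Section~\ref{sec:graph}, propagating the vanishing of $\eta$ first through $\tilde{S}$ itself and then one step beyond. The crucial (implicit) starting point is some $i_0 \in \tilde{S}$ at which $\eta_{i_0} = 0$; in the application following Theorem~\ref{main} this is automatic, since $\beta_+$ is defined as an infimum and so $u^{\beta_+}$ must touch $u^{\beta_0} + \varphi^*$ somewhere inside $\tilde{B}$.

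The bulk of the argument is an induction on the graph distance from $i_0$ restricted to $\tilde{S}$. Setting $\tilde{S}_k = \{i \in \tilde{S} : d(i_0, i) \leq k\}$, the base case $\eta_{i_0} = 0$ is our starting datum. For the inductive step, take any $j \in \tilde{S}_{k+1} \setminus \tilde{S}_k$ and pick some $i^* \in \tilde{S}_k$ with $d(i^*, j) = 1$; by the inductive hypothesis $\eta_{i^*} = 0$, and by assumption $E_{i^*}(u^*) = E_{i^*}(u^* + \eta) = 0$, so Proposition~\ref{contact} applied at $i^*$ yields $\eta_j = 0$. Connectedness of $\tilde{S}$ guarantees that every point has finite graph distance from $i_0$, so $\tilde{S} = \bigcup_k \tilde{S}_k$ and $\eta$ vanishes on all of $\tilde{S}$.

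For the final claim about the $1$-neighborhood, I would apply Proposition~\ref{contact} once more at every point of $\tilde{S}$: if $j \notin \tilde{S}$ satisfies $d(j, \tilde{S}) = 1$, choose $i^* \in \tilde{S}$ with $d(i^*, j) = 1$, and since both $\eta_{i^*} = 0$ and the two equilibrium equations at $i^*$ hold, Proposition~\ref{contact} at $i^*$ forces $\eta_j = 0$. The main subtlety, if one wants a completely self-contained statement, is locating an initial contact point $i_0$ purely from the stated hypotheses; everything else is a routine combinatorial propagation along paths in the connectivity graph, with connectedness of $\tilde{S}$ doing the real work of linking local applications of Proposition~\ref{contact} into a global conclusion.
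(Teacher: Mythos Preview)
Your approach is essentially the same as the paper's: both proceed by induction from an initial contact point, repeatedly applying Proposition~\ref{contact} to propagate $\eta_j=0$ along graph edges through the connected set $\tilde S$, and then one step beyond into the $1$-neighborhood. You have also correctly flagged that the existence of an initial point $i_0\in\tilde S$ with $\eta_{i_0}=0$ is an implicit hypothesis (used in the paper's proof of Proposition~\ref{contact} via the phrase ``since $\eta_{i^*}=0$'') that is supplied only by the surrounding application rather than by the stated hypotheses of the proposition itself.
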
 

The proof of Proposition~\ref{contact2} is to proceed by induction
starting on the point $i^{*}$. Applying Proposition~\ref{contact} we
obtain that for all the points $j$ such that $d(i^*, j) = 1$, we
should have $\eta_j = 0$. Now, for such points $j$ that belong to $S$,
the argument can restart. Therefore, proceeding by induction, we can
always prolong the paths that land in $S$. Because $S$ is connected,
we can cover all the set $S$ and obtain that $\eta_j = 0$ for all the
$j$ in $S$. Note also that in the last step, we can get also that
$\eta$ vanishes in the set of points that are at distance $1$ from
$S$.  It will be important for future purposes that, as observed in
Proposition~\ref{trivial}, we have that the set where we can obtain
that $\eta = 0$ is strictly larger than the set $S$.

\qed

\begin{remark} 
With the analogies in Remark~\ref{rem:twist}, we note Proposition~\ref{contact2} 
is reminiscent to the proof of the comparison principle for elliptic equations. 
Of course, the proof in the discrete case is different. The subtlety that 
we obtain the comparison in a larger set than the set where the equation holds 
does not have any analogue in the elliptic equations case. 
\end{remark} 

Now, we come back to the proof of  Theorem~\ref{main}.

Since we have that $u^{\beta_+} \ge u^\beta_0 + \varphi^*$ 
and that the $\beta_+$ is the smallest possible,  we have alternatives: 
\begin{itemize} 
\item[A)]  There is a point where $\varphi^*$ is strictly  positive;
\item[B)]  $\varphi_i^* < 0$ for all $i \in \tilde S$;
\item[C)] $\varphi^* \equiv 0$. 
\end{itemize} 

Theorem~\ref{main} will be established when we show that all these
alternatives are impossible.  Hence, we conclude that $\varphi^*$ in
\eqref{equilibriumB} could not exist and, hence no $\varphi$ satisfying 
\eqref{contradiction} could exist.

The case C) can be excluded becasuse we argued in \eqref{varphistarnottrivial} that $\varphi^*$ should be 
non trivial. 

In case A), there exists $i^* \in \tilde S$ such that 
\[
u^{\beta_+}_{i^*} = u^{\beta_0}_{i^*} +
 \varphi^*_{i^*}.
\] 
In this case, recalling that $u^{\beta_+}$
satisfies  the  equilibrium equations in the whole 
$\Lambda$ and that $u^{\beta_0}+ \varphi^*$ satisfies 
them in $\tilde S$, we can apply Proposition~\ref{contact2} with $\eta= \varphi^*$ and obtain that 
\[
 u^{\beta_+}_{j} = u^{\beta_0}_{j} + \varphi^*_{j}, \quad d(j, \tilde S) \le 1.
\]

The important point of the above observation is that there is such a point 
$j$ outside of $\tilde S$. That is, a point $j$ outside of the support of 
$\varphi^*$. Hence, there is point $j^*$ such that 
\begin{equation}\label{identicalj}
u^{\beta_+}_j = u^{\beta_0}_j.
\end{equation}

When we apply Proposition~\ref{contact2} with $\eta= u^{\beta_+} - u^{\beta_0}$ we obtain that $u^{\beta_+} = u^{\beta_-}$. 
This is a contradiction with $\varphi^*$ being strictly positive. 

Note that if we assume (A2)' from \eqref{identicalj} we could obtain the conclusion without applying Proposition~\ref{trivial}.

Excluding Case B) is very similar to excluding case A), but actually easier. 
Since there is point $j$ where $\varphi^*_j$ is strictly negative, we can 
have that there is a $j^*$ where $u^{\beta_-}$ touches from below 
the $u^{\beta_0} + \varphi^*$. Applying again Proposition~\ref{contact2}, we derive
that $u^{\beta_-} = u^{\beta_0}$ which is a contradiction with the assumption that 
$\varphi^*$ was strictly negative.

\section{Some concrete examples of the models considered}\label{sec:examples}
In this section, we will show how very different models fit simultaneously in the framework developed.
The fact that we can obtain results for different models at the same times is due to the generality of the methods we present here. In some cases, we will also present different proofs.

\subsection{General one-dimensional periodic models}
The papers \cite{Rafael'08} considers one dimensional models
given by energies of the form:
\begin{equation}\label{formal functional}
\mathscr{L} (u) = \sum_{k} \sum_{L} H_{L, k} (u_k, u_{k+1}, \ldots , u_{k+L})   
\end{equation}where $u: \mathbb{Z} \rightarrow \mathbb{R}$ and $H_{L,k}$

Note that the models in \eqref{formal functional} enjoy a
 translation invariance, which is not present in our general set up, but which is physically justified.

The corresponding equilibrium equation for the models \eqref{formal functional}
are: 
\begin{equation}
\mathscr{E}_i (u) = \sum_{k} \sum_{L} \sum_{j} \partial_{j+1} H_{L, k} (u_{i-j}, \ldots, u_i, \ldots, u_{L-k+i-j}).
\end{equation}

The paper \cite{Rafael'08} includes coercivity and regularity similar
to ours,  it includes an extra periodicity property 
\[
H_L(u_k, u_{k+1}, \ldots u_{k+L}) = 
H_L(u_k + 1 , u_{k+1} + 1 , \ldots u_{k+L} + 1)
\]
as well as higher regularity assumption. On the other hand, the 
paper \cite{Rafael'08}  does not use the full strength of the 
ferromagnetic property and 
indeed they allows some antiferromagentic terms. 
Note that, since the systems in \cite{Rafael'08} are 
translation invariant, the ferromagnetic transitive is 
implied by the ferromagnetic property of nearest 
neighbors (there are other assumptions such as the 
strict ferromagnetic for other sets of interactions). 

The papers \cite{Rafael'08} consider only equilibriun configurations 
given by a hull function
\[
 u_k = \omega k + h(\omega k)
\]
where $h$ is a periodic function called the \emph{``hull funcion''}. 
The function is such that $t + u(t)$ is an increasing function. 

It is easy to see that -- it is shown with many details  in \cite{Rafael'08} 
that if  $h$ is the hull function for a critical point so is 
$u^\beta$ given by 
\[
h^\beta(\theta) = \beta  + h(\theta + \beta). 
\]
We observe that, when $h$ is a smooth function and $|h|_{L^\infty} <1$, the configurations 
obtained for all these hull functions produce a foliation in our sense. 

Hence, applying Theorem~\ref{main}, we obtain the following result: 

\begin{theorem} 
Assume the setup of \cite{Rafael'08} assume furthermore, that, for 
some hulll function, the 
system satisfies the 
ferromagnetic property
\[
\partial_i \partial_j  H_L   \le 0 
\]
and that
\[
\partial_1 \partial_2 H_1(x, y) \le -\eta < 0 
\]

Then, the quasi-periodic 
solutions produced in \cite{Rafael'08} are ground states. 
\end{theorem}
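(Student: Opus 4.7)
The plan is to verify each hypothesis of Theorem~\ref{main} for the family of configurations built from the hull function, so that the conclusion of Theorem~\ref{main} delivers the ground state property directly. Concretely, from a smooth hull function $h$ with $t \mapsto t + h(t)$ strictly increasing, I will form
\begin{equation*}
u_k^\beta = \omega k + h^\beta(\omega k) = \omega k + \beta + h(\omega k + \beta), \qquad \beta \in \mathbb{R}, \ k \in \mathbb{Z},
\end{equation*}
and check (A1)--(A4), ferromagnetic transitivity, coercivity, and the summability (regularity) assumption of Section~\ref{sec:regularity}.

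For (A1), the fact that each $u^\beta$ is an equilibrium is exactly what is established in \cite{Rafael'08}: the hull function of a critical configuration gives, via the phase shift $\theta \mapsto \theta + \beta$ and the translation invariance of \eqref{formal functional}, a whole one-parameter family of critical configurations. For the strict version (A2)$'$, if $\beta_1 < \beta_2$, then $\omega k + \beta_1 < \omega k + \beta_2$ for every $k$, and strict monotonicity of $t \mapsto t + h(t)$ gives $u_k^{\beta_1} < u_k^{\beta_2}$; this immediately implies (A2). Condition (A3) follows because $h$ is periodic hence bounded, so $u_k^\beta - \beta$ stays bounded in $\beta$. For (A4), fix any $k$; as $\beta$ ranges over $\mathbb{R}$, the argument $\omega k + \beta$ also ranges over $\mathbb{R}$, and the continuous strictly increasing surjection $t \mapsto t + h(t)$ then sends this onto all of $\mathbb{R}$, so the graphs cover $\mathbb{Z} \times \mathbb{R}$.

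Next, I verify the ferromagnetic transitivity of Definition~\ref{Ftransitivity}. Given any $p, q \in \mathbb{Z}$, take the path $p_0 = p, p_1 = p \pm 1, \ldots, p_k = q$ stepping by $\pm 1$. Each consecutive pair $p_i, p_{i+1}$ is contained in a nearest-neighbor set $B_i$ on which some $H_{1,k}$ acts, and the hypothesis $\partial_1 \partial_2 H_1(x,y) \le -\eta < 0$ is uniform in $x,y$, so
\begin{equation*}
\partial_{p_i} \partial_{p_{i+1}} H_{B_i}(u^\beta + \varphi) \le -\eta < 0
\end{equation*}
regardless of $\varphi$. This also gives the graph structure of Section~\ref{sec:graph}: the induced graph on $\mathbb{Z}$ is simply the standard nearest-neighbor graph, and hence connected. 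The ferromagnetic condition $\partial_i \partial_j H_L \le 0$ for all indices is given by hypothesis.

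Finally, coercivity \eqref{eq:coerciveness} and the $u^\beta$-summability \eqref{fast-decay} must be inherited from the framework of \cite{Rafael'08}. The decay/regularity assumptions there are built precisely to ensure that $\mathscr{L}(u^\beta + \varphi) - \mathscr{L}(u^\beta)$ is a convergent sum for compactly supported $\varphi$, that its variational derivatives converge, and that adding a large spike at a single site causes the energy to diverge to $+\infty$; this is what \eqref{eq:coerciveness} and \eqref{fast-decay} demand. Moreover, differences $u^{\tilde\beta} - u^{\hat\beta}$ are bounded in $\ell^\infty$ (since $h$ is bounded), so by the remark after \eqref{fast-decay} the case (b) there reduces to case (a). With all hypotheses verified, Theorem~\ref{main} applies and yields that each $u^\beta$ is a ground state. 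The one place that requires the most care is this last step, matching the technical coercivity/summability conventions of \cite{Rafael'08} to the ones formulated in Section~\ref{sec:regularity}; I expect this to be a routine translation, but it is the only non-structural check in the argument.
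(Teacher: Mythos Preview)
Your proposal is correct and follows exactly the route the paper takes: the theorem is stated immediately after the paper verifies that the hull-function family $h^\beta(\theta)=\beta+h(\theta+\beta)$ yields a foliation satisfying (A1)--(A4), notes that the nearest-neighbor strict twist gives ferromagnetic transitivity by translation invariance, and that coercivity/regularity are inherited from the setup of \cite{Rafael'08}, then invokes Theorem~\ref{main}. Your write-up is in fact more detailed than the paper's own discussion, which dispatches the result in a single line (``Hence, applying Theorem~\ref{main}, we obtain the following result'').
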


\subsection{Application to the Frenkel-Kontorova models on quasi-periodic media}

These class of models was considered in \cite{SuL1} with nearest neighbor interactions. In \cite{SuL2} for more general interactions, many body interactions.
The papers \cite{SuL1,SuL2} consider quasiperiodic solutions 
which are non-resonant (inded Diophantine) with the frequency of 
the medium. The papers \cite{SuZL15,ZhangSL15} study quasi-periodic 
solutions which are resonant with the frequency of the medium in  models 
in which the interactions are only nearest neighbor. 
Using the results of this paper, we can conclude that the solutions are 
ground states provided that we assume transitive ferromagnetic conditions. 

In this section, we will consider only the problem in \cite{SuL1}, 
which will allow us to give a more direct proof of the results. 
We note that in the models based on the Frenkel-Kontorova models 
with next neighbor interaction, the transitive ferromagnetic hypothesis is 
automatic. 

We consider the following formal energy
\begin{equation}\label{FK energy}
\mathscr{S}(\{u\}_{i\in
\mathbb{Z}})=\sum_{n\in\mathbb{Z}}\frac{1}{2}(u_n-u_{n+1})^2-V(u_n\alpha),
\end{equation}where $V: \mathbb{T}^d\rightarrow\mathbb{R}$ and $\alpha\in \mathbb{R}^d$ 
 satisfy $k\cdot \alpha\neq 0$ when $k\in\mathbb{Z}^d-\{0\}$ where $d\geq 2$.

For simplicity, we denote $H(x,y) = \frac{1}{2}(x-y)^2 - V(x\alpha)$. Consequently, $\partial_{xy} H(x,y)= \partial_{yx} H(x,y)= -1$.

Under the assumption of \cite{SuL1,ZhangSL15}, using KAM method, we prove the existence of quasi-periodic solutions 
of the equilibrium equation
\begin{equation}\label{equilibrium}
u_{n+1}+u_{n-1}-2u_n+\partial_\alpha V(u_n\alpha)=0,
\end{equation} where $\partial_\alpha V \equiv (\alpha\cdot
\nabla) V$.

Indeed, the solutions of \eqref{equilibrium} we found are given by a hull function
\[
u_n=n\omega + h(n\omega\alpha)
\]for some given $\omega\in \mathbb{R}$. Therefore, the equilibrium equation we solve in terms of $h$ is 
\begin{equation}\label{equilibriumhullforhat}
h(\sigma+\omega\alpha)+h(\sigma-\omega\alpha)-2h(\sigma)+\partial_\alpha
V(\sigma+\alpha\cdot h(\sigma))=0
\end{equation}

The papers \cite{SuL1, ZhangSL15} have very different non-resonance 
assumptions from the assumptions in \cite{SuL1, SuL2} and require very different methods. Nevertheless,
from the point of view of the arguments of this paper, to show 
that the quasi-periodic solutions produced in both papers
are ground states, we can use the same argument. 

It is easy to see that
if $h(\sigma)$ is a solution \eqref{equilibriumhullforhat}, for any
$\beta\in\mathbb{R}$, $h(\sigma+\beta\alpha)+\beta$
is a solution.  We denote $h_\beta (\sigma)= h(\sigma+\beta\alpha)+\beta$.

Hence, let us denote $u_n^\beta = n\omega + h_\beta(n\omega\alpha)$ which is a continuum of equilibria of \eqref{equilibrium} with respect to the parameter $\beta\in\mathbb{R}$. It is easy to see that, for every fixed $n\in\mathbb{Z}$, $u_n^\beta$ is monotone with respect to $\beta$, i.e.,
\begin{equation}
\frac{\partial u_n^\beta}{\partial \beta} = 1+ \partial_\alpha h(n\omega\alpha + \beta\alpha) \neq 0.
\end{equation}
Without loss of generality, we asume $u_n^\beta$ is monotone increasing with respect to $\beta$.

The following result is a particular case of 
Theorem~\ref{main}, but in this section, we will present a different proof. 

\begin{theorem}\label{QPFK}
For every $\beta\in \mathbb{R}$, the configurations $u^\beta\equiv\{u_i^\beta\}_{i\in\mathbb{Z}}$ are 
ground states of \eqref{FK energy}.
\end{theorem}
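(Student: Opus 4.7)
The plan is to give a direct proof using a finite-interval minimization together with a discrete maximum principle, rather than invoking Theorem~\ref{main}. The argument exploits two features of the one-dimensional nearest-neighbor FK setting: the energy is coercive in each coordinate (thanks to the quadratic term in \eqref{FK energy}), and the mixed partial $\partial_{xy}H(x,y)\equiv-1$ gives a sharp ferromagnetic bound.

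Given a compactly supported $\varphi$ with support in $[N_1,N_2]$, the formal energy difference reduces to the finite sum
$$\mathscr{S}(u^{\beta_0}+\varphi)-\mathscr{S}(u^{\beta_0}) = \sum_{n=N_1-1}^{N_2}\bigl[H(v_n,v_{n+1})-H(u^{\beta_0}_n,u^{\beta_0}_{n+1})\bigr],$$
with $v=u^{\beta_0}+\varphi$. I would therefore consider the auxiliary Dirichlet problem of minimizing the same finite sum over $w$ with $w_{N_1-1}=u^{\beta_0}_{N_1-1}$ and $w_{N_2+1}=u^{\beta_0}_{N_2+1}$. By coercivity there is a minimizer $w^*$, which satisfies the equilibrium equation \eqref{equilibrium} at every interior site $n\in[N_1,N_2]$. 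The theorem then reduces to showing $w^*\equiv u^{\beta_0}$ on $[N_1-1,N_2+1]$.

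To establish this, suppose for contradiction that $w^*>u^{\beta_0}$ at some interior site (the reverse case is symmetric), and set
$$\beta_+:=\inf\{\beta\geq\beta_0 : u^\beta_n\geq w^*_n \text{ for all } n\in[N_1-1,N_2+1]\}.$$
Property (A3) makes $\beta_+$ finite, and continuity of $\beta\mapsto u^\beta$ makes the infimum attained, with $\beta_+>\beta_0$. Strict monotonicity $\partial_\beta u^\beta_n>0$ together with the matching boundary data $w^*_{N_1-1}=u^{\beta_0}_{N_1-1}$, $w^*_{N_2+1}=u^{\beta_0}_{N_2+1}$ forces any touching point $n^*$ with $u^{\beta_+}_{n^*}=w^*_{n^*}$ to lie strictly inside $[N_1,N_2]$. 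At $n^*$ both configurations solve \eqref{equilibrium}; subtracting the two equations and using $u^{\beta_+}_{n^*}=w^*_{n^*}$ cancels the $V$-terms and leaves
$$0=(u^{\beta_+}_{n^*-1}-w^*_{n^*-1})+(u^{\beta_+}_{n^*+1}-w^*_{n^*+1}),$$
with both summands $\geq 0$; hence each vanishes, and the touching propagates to $n^*\pm1$. Iterating the same identity at each newly touched interior site transports the equality across the interval until reaching a boundary site, where $u^{\beta_+}_{N_1-1}=u^{\beta_0}_{N_1-1}$ and strict monotonicity force $\beta_+=\beta_0$, the desired contradiction.

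The main obstacle is the boundary bookkeeping: one must verify that the first touching point is genuinely interior (so that \eqref{equilibrium} holds on both sides of the subtraction) and that the inductive propagation of equality actually reaches the Dirichlet boundary. Both are handled cleanly by strict monotonicity of the family combined with the choice of matching boundary data. Apart from this, the argument is a direct discrete analogue of the classical field-of-extremals proof, made especially transparent by the nearest-neighbor structure, which lets the comparison principle operate one site at a time.
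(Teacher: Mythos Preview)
Your proposal is correct and follows essentially the same strategy as the paper's proof: reduce to a finite Dirichlet minimization, take a minimizer $w^*$ satisfying \eqref{equilibrium} in the interior, slide the foliation until a leaf $u^{\beta_+}$ touches $w^*$ from above at an interior site, and subtract the two equilibrium equations there to exploit $\partial_{xy}H\equiv-1$. The only cosmetic difference is in how the contradiction is extracted: the paper picks the \emph{leftmost} touching site $i_2$ (so that $w_{i_2-1}<u^{\beta_1}_{i_2-1}$ strictly) and obtains $w_{i_2+1}>u^{\beta_1}_{i_2+1}$ in a single step, whereas you iterate the identity to propagate the equality $u^{\beta_+}=w^*$ all the way to a Dirichlet boundary node and then invoke strict monotonicity in $\beta$ to force $\beta_+=\beta_0$.
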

\begin{proof}
Suppose by contradiction that there exists $\beta_0$ such that $\{u_i^{\beta_0}\}_{i\in \mathbb{Z}}$ is not a ground state of \eqref{FK energy}. That is , there exists two integers $m<n$ and a configuration $\{v_i\}_{i\in \mathbb{Z}}$ satisfying
$v_i = u_i^{\beta_0}$ for any $i\leq m $ or $i> n$ such that
\begin{equation}\label{contradiction}
\mathscr{S}_m^n(\{v_i\}_{\i\in\mathbb{Z}})\equiv\sum_{i=m}^n  \frac{1}{2}(v_i-v_{i+1})^2-V(v_i\alpha) < \sum_{i=m}^n  \frac{1}{2}(u_i^{\beta_0}-u_{i+1}^{\beta_0})^2-V(u_i^{\beta_0}\alpha).
\end{equation}

Since $\min_{v\in\mathbb{R}^{\mathbb{Z}}} \mathscr{S}_m^n(v)$ is a
minimizing problem of finite variables and $\mathscr{S}_m^n(v)$ is
bounded from below, there exists a minimizing segment
$\{w_i\}_{i=m}^{n+1}$ of $\mathscr{S}_m^n$ with the boundary condition
$w_i=u_i^{\beta_0}$ for $i = m $ or $i= n+1$. One can suppose, without
loss of generality, that there exists $m<i_0\leq n$ such that $w_{i_0}
> u_{i_0}^{\beta_0}$. Since $u^{\beta}$ is a foliation, there exist
$\beta_1>\beta_0$ and $m<i_1\leq n$ such that
\begin{equation}\label{contradiction}
w_{i_1} = u_{i_1}^{\beta_1}, \text{ and }w_i\leq u_i^{\beta_1},\quad \forall~m\leq i\leq n+1.
\end{equation}

Indeed, one can choose $m<i_2\leq n$ such that $w_{i_2} = u_{i_2}^{\beta_1}$ and $w_{i_2-1} < u_{i_2-1}^{\beta_1}$.

We use and adaptation of 
the  standard technique of the Hilbert integral in calculus of variations
(see also \cite{CLlave'98}.  For every $m\leq i \leq n$ we calculate
\begin{equation}
\begin{split}
0 &= \partial_x H(w_i, w_{i+1}) + \partial_y H(w_{i-1}, w_i) + \partial_x H(u^{\beta}_i, u^{\beta}_{i+1}) + \partial_y H(u^{\beta}_{i-1}, u^{\beta}_i ) \\
&= \int_0^1 \frac{d}{dt}  \left[ \partial_x H(t w_i + (1-t)u^{\beta}_i, t w_{i+1} +(1-t)u^{\beta}_{i+1})\right. \\
&\qquad\qquad+\left. \partial_y H( t w_{i-1} + (1-t) u^{\beta}_{i-1}, t w_i +(1-t) u^{\beta}_i)\right] \ dt \\
&= \int_0^1 \left[(\partial_{xx} H) (w_i - u_i^{\beta} ) + (\partial_{xy} H) (w_{i+1} - u_{i+1}^{\beta})\right.\\
&\qquad\qquad +\left. (\partial_{yx} H) (w_{i-1} - u_{i-1}^{\beta} ) + (\partial_{yy} H) (w_{i} - u_{i}^{\beta})\right] \ dt.
\end{split}
\end{equation}

Let $i=i_2,~ \beta=\beta_1$ in the above calculation, we obtain
\[
0= w_{i_2+1} - u_{i_2+1}^{\beta_1} + w_{i_2-1} - u_{i_2-1}^{\beta_1}.
\]
Hence, due to the choice of $i_2$, we have $ w_{i_2+1} - u_{i_2+1}^{\beta_1} = u_{i_2-1}^{\beta_1} - w_{i_2-1} >0$, 
 which contradicts \eqref{contradiction}.
\end{proof}

\section*{Acknowledgements} 
The authors thank  the hospitality of the 
JLU-GT joint institute for theoretical Sciences, which 
made possible the collaboration.  R. L. also thanks
Beijing Normal University for hospitality.

\bibliographystyle{alpha}
\bibliography{QPreference}
\end{document}